\newcommand{\fmed}{\mathsf{med}}
\newcommand{\fconn}{\mathsf{conn}}
\title{Lower Bounds for Leaf Rank of Leaf Powers}
\author{Svein Høgemo}
\institute{University of Bergen}
\date{February 2024}
\begin{document}

\maketitle

\begin{abstract}
Leaf powers and $k$-leaf powers have been studied for over 20 years, but there are still several aspects of this graph class that are poorly understood. One such aspect is the \emph{leaf rank} of leaf powers, i.e. the smallest number $k$ such that a graph $G$ is a $k$-leaf power. Computing the leaf rank of leaf powers has proved a hard task, and furthermore, results about the asymptotic growth of the leaf rank as a function of the number of vertices in the graph have been few and far between. We present an infinite family of rooted directed path graphs that are leaf powers, and prove that they have leaf rank exponential in the number of vertices (utilizing a type of subtree model first presented by Rautenbach [Some remarks about leaf roots. Discrete mathematics, 2006]). This answers an open question by Brandstädt et al. [Rooted directed path graphs are leaf powers. Discrete mathematics, 2010].
\end{abstract}

\section{Introduction}

A graph $G$ is a $k$-\emph{leaf power} if there is a tree $T$ such that $G$ is isomorphic to the subgraph of $T^k$ induced by its leaves. $T$ is referred to as the $k$-\emph{leaf root} of $G$. The original motivation for studying leaf powers comes from computational biology, particularly the problem of reconstructing phylogenetic trees -- if $T$ is interpreted as the ``tree of life'', then $G$ constitutes a simplified model of relationships between known species, where those species within distance $k$ in $T$ are deemed ``closely related'' and become neighbors in $G$, and those that have larger distance are deemed ``not closely related''. Checking if a graph is a $k$-leaf power for some $k$ is thereby analogous to the task of fitting an evolutionary tree to these simplified relationships. $k$-leaf powers were first introduced by Nishimura, Ragde and Thilikos in 2000 \cite{NRT02}, although the connection between powers of trees and the task of (re)constructing phylogenetic trees was explored by several authors at the time \cite{LKJ00}. Since then, the class of leaf powers -- all graphs that are $k$-leaf powers for some $k$ -- has become a well-studied graph class in its own right. A survey on leaf powers published in a recent anthology on algorithmic graph theory \cite{RLB21} gives a more or less up-to-date introduction to the most important results on this graph class.\\

Two closely related questions concern the characterization of $k$-leaf powers for some constant $k$; and the problem of computing the leaf rank (the smallest integer $k$ such that $G$ is a $k$-leaf power) of a leaf power $G$. The first problem has been addressed by several authors, most notably by Lafond \cite{Laf21}, who announced an algorithm for recognizing $k$-leaf powers that runs in polynomial time for each fixed $k$ (though, admittedly, with a runtime that depends highly on $k$). Complete characterizations in terms of forbidden subgraphs are, on the other hand, known only for 2- and 3-leaf powers~\cite{DGHN06} and partially for 4-leaf powers \cite{Rau06} (see also \cite{BLS08}). The second problem seems even harder. A few graph classes have bounded leaf rank, for example block graphs or squares of trees. The only subclass of leaf powers with unbounded leaf rank, for which leaf rank is shown to be easy to compute, is the chordal cographs (also known as the trivially perfect graphs); this was shown very recently by Le and Rosenke \cite{LR23}.\\

Though deciding the exact leaf rank of a leaf power seems hard, the asymptotic growth of the leaf rank as a function of the number of vertices has shown to be at most linear for most subclasses of leaf powers \cite{BHMW10} (also implicit in \cite{BHTV22}, see further below). This could lead one to conjecture at most linear -- or at least polynomial -- growth on the leaf rank of any leaf power. In this paper, we show that this is not the case. In particular, we show that there exists an infinite family of leaf powers $\{R_m \mid m\geq 3\}$ that have leaf rank proportional to $2^{\frac{n}{4}}$, where $n$ is the number of vertices.\\

The broader problem of recognizing leaf powers has been addressed more recently: Leaf powers, being induced subgraphs of powers of trees, are strongly chordal, as first noted in \cite{BL06}. Nevries and Rosenke \cite{NR16} find a forbidden structure in the clique arrangements of leaf powers, and find the seven forbidden strongly chordal graphs exhibiting this structure. Lafond \cite{Laf17} furthermore finds an infinite family of strongly chordal graphs that are not leaf powers, and shows that deciding if a chordal graph contains one of these graphs as an induced subgraph is NP-complete. Jaffke et al. \cite{JKST19} point out that leaf powers have mim-width 1, a trait shared with several other classes of intersection graphs. Mengel's~\cite{Men18} observation that strongly chordal graphs can have unbounded (linear in the number of vertices) mim-width suggests that the gap between leaf powers and strongly chordal graphs is quite big \cite{Jaf20}.\\

It has also been observed \cite{BR10} that leaf powers are exactly the induced subgraphs of powers of trees (so-called \emph{Steiner powers} \cite{LKJ00}). This, and the observation that $k$-leaf powers without true twins are induced subgraphs of $(k-2)$-powers of trees, forms the basis for the algorithms to recognize 5- and 6-leaf powers \cite{CK07,Duc19}, which until Lafond's breakthrough result \cite{Laf21} were the state of the art in recognizing $k$-leaf powers. One peculiar interpretation of our result is therefore that there exist induced subgraphs of powers of trees whose smallest tree powers that contain them are exponentially bigger than themselves.\\

In another direction, Bergougnoux et al. \cite{BHTV22} look at subclasses of leaf powers admitting leaf roots with simple structure: In particular, they show that the leaf powers admitting leaf roots that are subdivided caterpillars are exactly the \emph{co-threshold tolerance} graphs, a graph class lying between interval graphs and tolerance graphs (\!\!\cite{MRT88}, see Figure \ref{fig:classes}). The leaf roots constructed in \cite{BHTV22} had rational weights; however, it is not hard to see that they can be modified into $k$-leaf roots for some $k\leq 2n$. Interestingly, this shows that there is a big difference between caterpillar-shaped leaf roots and caterpillar-shaped RS models (defined in Section 2): As we will see, the graphs considered in this paper have RS models that are caterpillars, but exponential leaf rank.\\


The rest of the paper is organized as follows: In Section 2 we develop basic terminology regarding leaf powers, chordal graphs and subtree models. In Section 3 we show how each graph $R_n$ is built and show that these graphs are leaf powers, in particular rooted directed path graphs. In Section 4 we prove the main result, that $R_n$ has exponential leaf rank for every $n$. In the end, we provide a brief discussion on possible upper bounds on the leaf rank of leaf powers.

\section{Basic Notions}

We use standard graph theory notation. All trees are unrooted unless stated otherwise.

In this paper, we will assume that all trees we work with have at least three leaves, and therefore there exists at least one node of degree at least 3.

Some more specialized notions follow here:

\begin{definition}[Caterpillar]
A caterpillar is a tree in which every internal node lies on a single path. This path is called the \emph{spine} of the caterpillar.
\end{definition}

\begin{definition}[Connector]
For a leaf $v$ in a tree $T$, there is one unique node with degree at least 3 that has minimum distance to $v$. We call this node the \emph{connector} of $v$, or $\fconn(v)$. 
\end{definition}

\begin{definition}[$k$-leaf power, $k$-leaf root, leaf rank]
For some positive integer $k$, a graph $G$ is a $k$-leaf power if there exists a tree $T$ and a bijection $\tau$ from $V(G)$ to $L(T)$, the set of leaves of $T$, such that any two vertices $u,v$ are neighbors in $G$ if and only if $\tau(u)$ and $\tau(v)$ have distance at most $k$ in $T$. $T$ is called a $k$-leaf root of $G$. The leaf rank of $G$, $lrank(G)$, is the smallest value $k$ such that $G$ is a $k$-leaf power, or $\infty$ if $G$ is not a leaf power.
\end{definition}

\begin{definition}[Leaf power]
A graph $G$ is a leaf power if there exists a positive integer $k$ for which $G$ is a $k$-leaf power.
\end{definition}

\begin{definition}[Leaf span]
Given a graph class $\mathcal{F}$, the \emph{leaf span} of $\mathcal{F}$, $ls_\mathcal{F}$, is a function on the positive integers that, for each $n$, outputs the smallest $k$ such that every graph in $\mathcal{F}$ on $n$ vertices has a $k$-leaf root. Clearly, this definition only makes sense if $\mathcal{F}$ is the class of leaf powers, or a subclass thereof. Alternatively, one can define $ls_\mathcal{F}(n) = \infty$ if $\mathcal{F}$ contains a graph on $n$ vertices which is not a leaf power.
\end{definition}

In our case, we will only look at leaf powers, so in our case, the leaf span is well defined regardless.

Leaf powers are known to be \emph{chordal graphs} \cite{Rau06}, graphs with no induced cycles of four or more vertices. A famous theorem by Gavril \cite{Gav74} says that the chordal graphs are the intersection graphs of subtrees of a tree, i.e. the graphs admitting a \emph{subtree model}:

\begin{definition}[Subtree model]
Given a graph $G$, a \emph{subtree model} of $G$ is a pair $(T,\mathcal{S})$, where $T$ is a tree and $\mathcal{S} = \{S_v \mid v\in V(G)\}$ is a collection of connected subtrees of $T$ with the property that for any two vertices $u,v\in V(G)$, the subtrees $S_u$ and $S_v$ have non-empty intersection if and only if $uv\in E(G)$.
\end{definition}

\begin{definition}[Cover]
Let $G$ be a chordal graph and $(T,\mathcal{S})$ a subtree model of $G$. For a node $x\in V(T)$, the \emph{cover} of $x$, $V_{G,T}(x)$ (subscripts may be omitted), is defined as the set of vertices in $G$ whose subtrees in $T$ include $x$: $V_{G,T}(x) = \{v\in V(G) \mid x\in S_v\}$.
\end{definition}

The cover of any node must be a clique in $G$.

It is a well-known fact that subtrees of a tree have the Helly property (see e.g. \cite{Gol04}). Therefore, given a chordal graph $G$ and a subtree model $(T,\mathcal{S})$, we define the following subtrees:

\begin{definition}[Clique subtree]\label{obs:helly}
Given $G$ and$(T,\mathcal{S})$ as above, for every maximal clique $C\subseteq V(G)$, the \emph{clique subtree} $S_T(C) := \bigcap_{v\in C}S_v$ is non-empty (we can omit the subscript $T$ if the tree is obvious from context).
\end{definition}

It is therefore clear that every maximal clique of $G$ is the cover of some node in $T$.  Furthermore, for any two maximal cliques $C\neq C'$, $S_T(C)\cap S_T(C') = \emptyset$.\\

We give an alternative characterization of leaf powers here, that we will use to prove our result:

\begin{definition}[Radial Subtree model]
Given a graph $G$, a \emph{radial subtree model} (henceforth called \emph{RS model}) of $G$ is a subtree model $(T,\mathcal{S})$, where for each $v\in V(G)$ there exists a node $c_v\in V(T)$ (the \emph{center}) and integer $r_v\geq 0$ (the \emph{radius}) such that $S_v = T[\{u\in V(T)\mid dist(u,c_v)\leq r_v\}]$. In other words, $S_v$ is spanned by exactly the nodes in $T$ having distance at most $r_v$ from $c_v$. Each $S_v$ is called a \emph{radial subtree}.
\end{definition}

RS models are a special case of the much more general \emph{NeST} (Neighborhood Subtree Tolerance) models, introduced by Bibelnieks and Dearing in \cite{BD93}. NeST models are more complicated, involving trees embedded in the plane with rational distances, as well as tolerances on each vertex. We will therefore not define them here. In any case, if one removes the tolerances, the graphs admitting the resulting ``NeS models'' \cite{BHTV22} are, again, exactly the leaf powers \cite{BHMW10}. NeST graphs thus generalize leaf powers in much the same way that tolerance graphs generalize interval graphs (see Figure \ref{fig:classes}).\\

The following lemma is implicit in Rautenbach (\!\!\cite{Rau06}, Lemma 1) as a proof that leaf powers are chordal -- though RS models were not explicitly defined in that paper. We will repeat the proof here, since its contrapositive (stated below as Corollary \ref{cor:neslowerbound}) is crucial for our proof that $R_n$ has high leaf rank.

\begin{lemma}\label{lemma:nesequal}
If a graph $G$ admits a $k$-leaf root, then it admits a RS model where $\max_{v\in V(G)}r_v \leq k$.
\end{lemma}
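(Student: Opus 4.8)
The plan is to build the RS model directly on (a subdivision of) the given $k$-leaf root $T$, placing the center of each radial subtree at the leaf corresponding to that vertex. The single fact driving everything is the behaviour of balls in a tree metric: for any tree and any nodes $c_u,c_v$, the ball of radius $r_u$ about $c_u$ and the ball of radius $r_v$ about $c_v$ intersect if and only if $dist(c_u,c_v)\le r_u+r_v$. I would prove this first, since it is the technical crux. Taking the unique $c_u$–$c_v$ path and the point $x$ on it at distance $\min(r_u,dist(c_u,c_v))$ from $c_u$, one checks directly that $x$ lies in both balls exactly when $dist(c_u,c_v)\le r_u+r_v$; conversely, projecting any common point onto the path only decreases its distance to each center, so if the balls meet at all they already meet on the path. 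Connectedness of balls (so that each is a legitimate radial subtree) falls out of the same path argument.

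With this in hand, the naive attempt is to set every center $c_v=\tau(v)$ and every radius $r_v=k/2$ inside $T$ itself, so that two subtrees meet iff $dist_T(\tau(u),\tau(v))\le k$, which is exactly the adjacency condition defining the $k$-leaf root. The one genuine obstacle here is parity: radii must be nonnegative integers, so $k/2$ is illegal when $k$ is odd, and more fundamentally a common integer radius $r$ can realize only the \emph{even} threshold $2r$ against the integer-valued tree metric.

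I would resolve this by subdividing each edge of $T$ once to obtain a tree $T'$, which doubles every distance, so $dist_{T'}(x,y)=2\,dist_T(x,y)$ for all nodes and in particular every leaf-to-leaf distance in $T'$ is even. The leaves of $T$ remain leaves of $T'$, so I can still take $c_v=\tau(v)$, now with the uniform radius $r_v=k$. By the ball-intersection fact, $S_u$ and $S_v$ meet iff $dist_{T'}(\tau(u),\tau(v))\le 2k$ iff $dist_T(\tau(u),\tau(v))\le k$ iff $uv\in E(G)$; the equivalence is clean precisely because every leaf-to-leaf distance in $T'$ is even and the threshold $2k$ is even, so no boundary case arises. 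Each $S_v$ is a ball, hence a connected radial subtree, and $\max_{v}r_v=k\le k$, giving the desired RS model. The only remaining bookkeeping is that subdivision preserves the leaf set and that distinct vertices still receive distinct centers, both immediate from $\tau$ being a bijection onto $L(T)$.
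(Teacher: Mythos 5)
Your proposal is correct and uses essentially the same construction as the paper: subdivide every edge of the $k$-leaf root once, center each radial subtree at $\tau(v)$, and take uniform radius $k$, so that subtrees meet iff the original leaf distance was at most $k$. The only difference is that you spell out the ball-intersection criterion and the parity motivation explicitly, which the paper leaves implicit.
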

\begin{proof}
Given a $k$-leaf root $(T,\tau)$ of $G$, we make a RS model of $G$ by:
\begin{itemize}
\item Subdividing every edge in $T$ once.
\item Setting $c_v := \tau(v)$ and $r_v := k$ for every $v\in V(G)$.
\end{itemize}
Two subtrees $S_v,S_u$ intersect iff $dist(v,u)\leq 2k$; in other words, iff $u$ and $v$ had distance at most $k$ before subdivision of the edges.
\qed\end{proof}

\begin{corollary}\label{cor:neslowerbound}
Let $G$ be a leaf power. If there is some integer $r$ such that every RS model of $G$ contains a subtree with radius at least $r$, then $G$ is not a $k$-leaf power for any $k<r$.
\end{corollary}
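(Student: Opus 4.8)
The plan is to obtain the statement directly as the contrapositive of Lemma~\ref{lemma:nesequal}, so the proof should be a short argument by contradiction rather than a fresh construction. First I would fix the integer $r$ furnished by the hypothesis, so that by assumption every RS model of $G$ contains at least one radial subtree whose radius is at least $r$. I would then suppose, toward a contradiction, that $G$ is a $k$-leaf power for some $k < r$, with the goal of exhibiting a single RS model that violates this hypothesis.

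The key step is to feed this supposed $k$-leaf root into Lemma~\ref{lemma:nesequal}. The lemma hands back an RS model $(T,\mathcal{S})$ of $G$ with $\max_{v\in V(G)} r_v \leq k$. Since $k < r$, every radius in this particular model satisfies $r_v \leq k < r$, and so $(T,\mathcal{S})$ contains \emph{no} subtree of radius at least $r$. This directly contradicts the hypothesis that every RS model of $G$ has such a subtree. Hence no $k < r$ can occur, i.e. $G$ is not a $k$-leaf power for any $k < r$, which is exactly the claim.

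I do not expect any genuine obstacle here: all of the mathematical substance is already carried by Lemma~\ref{lemma:nesequal}, and the corollary merely repackages it in contrapositive form for later use. The one point I would be careful about is the alignment of quantifiers: the hypothesis quantifies over \emph{every} RS model, whereas the lemma produces only \emph{one} RS model. Since a single witness suffices to refute a universal statement, these match up precisely, and the argument goes through. I would also remark that because $G$ is assumed to be a leaf power, at least one RS model exists, so the hypothesis is not vacuously satisfied.
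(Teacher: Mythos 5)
Your proposal is correct and is exactly the argument the paper intends: the paper presents this corollary as the contrapositive of Lemma~\ref{lemma:nesequal} and gives no separate proof, and your contradiction argument (feeding a hypothetical $k$-leaf root with $k<r$ into the lemma to produce an RS model with all radii at most $k$) is precisely that contrapositive spelled out. Your remarks on quantifier alignment and non-vacuity are accurate and do not change the substance.
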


The radial subtrees constructed in this proof are all centered on leaves and have the same radius. However, the definition of RS models is more general, so we must prove that the implication holds in the other direction as well:

\begin{lemma}
Let $G$ be a graph. If $G$ admits an RS model $(T,\mathcal{S})$, then $G$ is a leaf power.
\end{lemma}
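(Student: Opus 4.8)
The plan is to convert the RS model $(T,\mathcal{S})$ directly into a $k$-leaf root for a suitably large $k$, essentially reversing the construction in the proof of Lemma~\ref{lemma:nesequal}. The first step is to pin down the precise adjacency condition encoded by an RS model. Since each $S_v$ is the ball of radius $r_v$ around $c_v$ in the tree $T$, and two balls in a tree meet exactly when the distance between their centers is at most the sum of the radii, I would first establish the characterization that for all $u,v\in V(G)$,
\[
uv\in E(G) \iff dist_T(c_u,c_v)\le r_u+r_v.
\]
The forward direction is the triangle inequality applied to any node common to $S_u$ and $S_v$; the reverse direction follows by taking the node on the path from $c_u$ to $c_v$ at distance $\min(r_u, dist_T(c_u,c_v))$ from $c_u$ and checking it lies in both balls (here it is convenient that radii and edge lengths are integers, so such a node genuinely exists).

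Given this characterization, the second step is the construction. Let $M = 1 + \max_{v} r_v$, let $T_0$ be the minimal subtree of $T$ spanning all centers $\{c_v\}$ (so that all pairwise distances among centers are preserved), and build $T'$ from $T_0$ by attaching, for each $v\in V(G)$, a fresh pendant path of length $M-r_v\ge 1$ with endpoint $\ell_v$ to the node $c_v$. Define $\tau(v)=\ell_v$ and set $k=2M$. I would then verify that $\tau$ is a bijection onto $L(T')$: every leaf of $T_0$ is a center (by minimality of the spanning subtree), hence becomes internal once a pendant path is attached, so the only leaves of $T'$ are the new endpoints $\ell_v$.

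The final step is to check the distance condition. A short computation gives, for all $u\neq v$ (including the case $c_u=c_v$, where the two pendant paths simply meet at their common endpoint),
\[
dist_{T'}(\tau(u),\tau(v)) = 2M + dist_T(c_u,c_v) - r_u - r_v,
\]
so that $dist_{T'}(\tau(u),\tau(v))\le k$ if and only if $dist_T(c_u,c_v)\le r_u+r_v$, which by the characterization above is equivalent to $uv\in E(G)$. Hence $(T',\tau)$ is a $2M$-leaf root of $G$, and $G$ is a leaf power.

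I expect the only genuinely delicate points to be the bookkeeping around leaves and shared centers: one must prune $T$ to the minimal subtree spanning the centers so that $T'$ does not inherit a spurious degree-one node failing to correspond to a vertex of $G$, and one must confirm that the single distance formula above stays valid when several vertices share a center. Everything else is a routine verification once the ball-intersection characterization is in hand.
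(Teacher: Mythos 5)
Your proposal is correct and follows essentially the same construction as the paper: with $M=1+\max_v r_v$ your pendant paths of length $M-r_v$ and target value $2M$ coincide exactly with the paper's paths of length $k+1-r_v$ and its $(2k+2)$-leaf root, the only cosmetic difference being that you prune to the minimal subtree spanning the centers before attaching leaves while the paper prunes stray leaves afterwards. Your explicit ball-intersection characterization and distance formula just spell out the verification the paper leaves as ``evident.''
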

\begin{proof}
Let $k$ be the maximum radius among the subtrees in $\mathcal{S}$. For each $v\in V(G)$, we add a new leaf to $T$ that is fastened to $c_v$ with a path of length $k+1-r_v$, and let $\tau(v)$ point to this new leaf. Afterwards, as long as $T$ contains a leaf $x$ that is not one of these new leaves, delete the path from $x$ to $\fconn(x)$. Now, it is evident that two subtrees $S_u,s_v$ overlap if and only if $dist(\tau(u),\tau(v)) \leq 2k+2$. In other words, $(T,\tau)$ is a $(2k+2)$-leaf root of $G$.
\qed\end{proof}

\begin{figure}[ht!]
    \centering
    \includegraphics[width=0.5\linewidth]{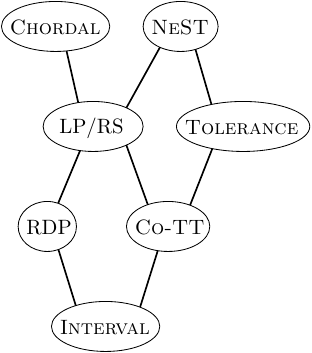}
    \caption{A Hasse diagram of inclusions between leaf powers and some related graph classes. (Abbreviations: LP=Leaf Powers; RDP=Rooted Directed Path graphs; Co-TT=Co-Threshold Tolerance graphs; RS=Graphs with RS models; NeST=Neighborhood Subtree Tolerance graphs.) All inclusions are strict and all non-inclusions are between incomparable graph classes. For more information, see \cite{BHTV22,BD93,BHMW10,GMT84,MRT88}.}
    \label{fig:classes}
\end{figure}

\section{Construction of $R_n$}

The graphs with high leaf rank that we construct are \emph{rooted directed path graphs}:

\begin{definition}[Rooted directed path graph]
A graph $G$ is a rooted directed path graph (RDP graph) if it admits an intersection model consisting of paths in an arborescence (a DAG in the form of a rooted tree where every edge points away from the root).
\end{definition}

\begin{theorem}[\!\!\cite{BHMW10}, Theorem 5]
RDP graphs are leaf powers.
\end{theorem}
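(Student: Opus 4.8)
The plan is to exhibit, for an arbitrary RDP graph $G$, a radial subtree model $(T,\mathcal{S})$; the lemma above (every graph admitting an RS model is a leaf power) then finishes the proof immediately. So the whole task reduces to turning the directed-path model into a radial one. First I would fix the arborescence $A$ with root $\rho$ and write each vertex $v$ as a directed path $P_v$ running from its top $t_v$ (the endpoint closer to $\rho$) down to its bottom $b_v$, and I would record a clean intersection criterion. Since every node of $P_v$ is an ancestor-or-equal of $b_v$, any node shared by $P_u$ and $P_v$ is a common ancestor of $b_u$ and $b_v$, hence an ancestor-or-equal of $m:=\mathrm{lca}(b_u,b_v)$; tracing the two paths up to $m$ then shows that $P_u\cap P_v\neq\emptyset$ if and only if $t_u\preceq b_v$ and $t_v\preceq b_u$, equivalently $\max(d(t_u),d(t_v))\le d(m)$, where $d$ denotes depth in $A$. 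This ``both tops lie above the meet of the two bottoms'' is the combinatorial heart of the model.

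The natural first attempt is to keep $T$ equal to (a subdivision of) $A$, center each subtree at the bottom via $c_v:=b_v$, and set radius $r_v:=d(b_v)-d(t_v)=|P_v|$, so that $S_v$ reaches exactly up to $t_v$ along its own branch. Two such balls meet iff $\mathrm{dist}_A(b_u,b_v)\le r_u+r_v$, and since $\mathrm{dist}_A(b_u,b_v)=d(b_u)+d(b_v)-2d(m)$ this simplifies to $d(m)\ge\tfrac12\bigl(d(t_u)+d(t_v)\bigr)$. This \emph{averaged} condition is implied by the true $\max$ condition but not conversely, so the resulting graph is a supergraph of $G$: it creates spurious edges exactly for pairs where one top sits much deeper than the other (the ball bleeds sideways and downward past the far branch). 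Worse, one checks that \emph{no} per-vertex radius assignment on the tree $A$ can repair this, because ball-intersection thresholds $d(m)$ at a value of the form $f(u)+f(v)$, which can never equal $\max(d(t_u),d(t_v))$. Hence the underlying tree must genuinely be reshaped, not merely reweighted.

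The main obstacle, then, is precisely this: the directed-path criterion is a \emph{conjunction} of two ancestry relations, whereas ball-intersection in a tree is a single \emph{additive} inequality. My plan to bridge the gap is to build $T$ from $A$ by attaching to each vertex a pendant that encodes $d(t_v)$ and by inserting separating structure (e.g. rapidly growing edge lengths, or subdivided ``blockers,'' placed so that entering a foreign branch is prohibitively far) along the descents of $A$. The goal is to arrange the geometry so that a ball can reach another vertex's region only when \emph{both} ancestry relations $t_u\preceq b_v$ and $t_v\preceq b_u$ hold, any single failure pushing the two centers out of combined range; the delicate point is to do this uniformly over all pairs, since which edges count as ``going up,'' ``going down,'' or ``foreign'' depends on the (pairwise varying) meet $m$. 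Verifying that the reshaped tree reproduces the $\max$ condition exactly, with no residual spurious or missing edges, is where the real work lies. I would treat the radii and the separation so that the averaged relaxation collapses onto the true adjacency, and only then invoke the RS-to-leaf-power lemma to conclude. (An alternative route, which I would keep in reserve, is to avoid an explicit metric construction altogether and instead induct on $A$, peeling off the deepest paths as simplicial-type vertices and extending a leaf root built for the remaining graph; but I expect the radial construction above to be the more natural fit for this paper's framework.)
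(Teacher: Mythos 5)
The statement is quoted from \cite{BHMW10} and the present paper gives no proof of its own, so your argument can only be judged on its own terms; as written it has a genuine gap. Your setup is sound: reducing to the construction of an RS model is legitimate (the lemma that every graph with an RS model is a leaf power then finishes the job), and your intersection criterion --- $P_u\cap P_v\neq\emptyset$ if and only if $\max(d(t_u),d(t_v))\le d(m)$ with $m=\mathrm{lca}(b_u,b_v)$ --- is correct. But the proof stops exactly where it would have to start: you never specify the final tree $T$, the centers, or the radii, and you explicitly defer the verification (``where the real work lies''). A menu of constructions one hopes will work (``pendants encoding $d(t_v)$,'' ``blockers,'' ``rapidly growing edge lengths\dots placed so that entering a foreign branch is prohibitively far'') is a plan, not a proof.

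Moreover, your diagnosis that ``the underlying tree must genuinely be reshaped, not merely reweighted'' is false, and this is what makes the gap avoidable. Keep $c_v=b_v$, give every edge from depth $j$ to depth $j+1$ length $2^j$ (realized by subdividing it $2^j-1$ times), so that a node at unweighted depth $j$ has weighted depth $L(j)=2^j-1$, and set $r_v=L(d(b_v))-L(d(t_v))\ge 0$. Two radial subtrees in a tree meet exactly when $\mathrm{dist}(c_u,c_v)\le r_u+r_v$, which here reduces to $L(d(t_u))+L(d(t_v))\le 2L(d(m))$; since $L(j+1)>2L(j)$ for all $j$, this averaged inequality is \emph{equivalent} to $\max(d(t_u),d(t_v))\le d(m)$ --- if the max condition holds the sum is at most $2L(d(m))$, and if it fails the larger term alone already exceeds $2L(d(m))$. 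That short computation closes your gap, and, reassuringly, it produces radii of order $2^n$, consistent with the paper's main theorem that exponential leaf rank is unavoidable for some RDP graphs, so no construction of this kind could have avoided the blow-up.
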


The leaf roots shown to exist by Brandstädt et al. in \cite{BHMW10} had, in the worst case, $k$ exponential in $n$. They left it as an open question whether the leaf span of RDP graphs actually is significantly smaller.

Here we show that it is not: specifically, for every $n \geq 3$, there is a RDP graph $R_n$ with $4n$ vertices that has leaf rank proportional to $2^n$. In other words, if $RDP$ is the class of RDP graphs and $LP$ the class of leaf powers, then $ls_{LP} \geq ls_{RDP} = 2^{\Omega(n)}$. This is the first result that shows that the leaf span of leaf powers is non-polynomial.


For some $n\geq 3$, the graph $R_n$ has $4n$ vertices: $V(R_n) = \bigcup_{i=1}^n \{a_i,b_i,c_i,d_i\}$. We define $E(R_n)$ through its maximal cliques: The family of maximal cliques is $$\mathcal{C}(R_n) = \{C_i \mid 1\leq i\leq n\}\cup\{C'_i \mid 1\leq i\leq n-1\}$$ where $$C_i = \{a_i,b_i,c_i,d_i\}$$ and $$C'_i = \{a_j\mid i\leq j\leq n\}\cup\{b_i,b_{i+1},c_i\}$$

\begin{remark}
For any $n$, $R_n$ is a rooted directed path graph.
\end{remark}
\begin{proof}
Let our arborescence be a rooted caterpillar $T$ with spine $x_1,x_2,\ldots,x_n$ and one leaf $y_i$ fastened to each $x_i$. The root is $x_1$. Each $a_i$ corresponds to the path from $x_1$ to $y_i$; each $b_i$ corresponds to the path from $x_{i-1}$ to $y_i$ (except $b_1$, whose path starts at $x_1$); each $c_i$ corresponds to the path from $x_i$ to $y_i$; and each $d_i$ corresponds to the path consisting only of $y_i$. We can easily check that, for each $1\leq i\leq n$, $V(y_i)=C_i$; and for each $1\leq i\leq n-1$, $V(x_i) = C'_i$; and there can be no other maximal cliques since every maximal clique must be the cover of some node in $T$. (see also Figure \ref{fig:rdp} for a visual representation of the paths.)
\qed\end{proof}

\begin{figure}[ht!]
    \centering
    \includegraphics[width=0.9\linewidth]{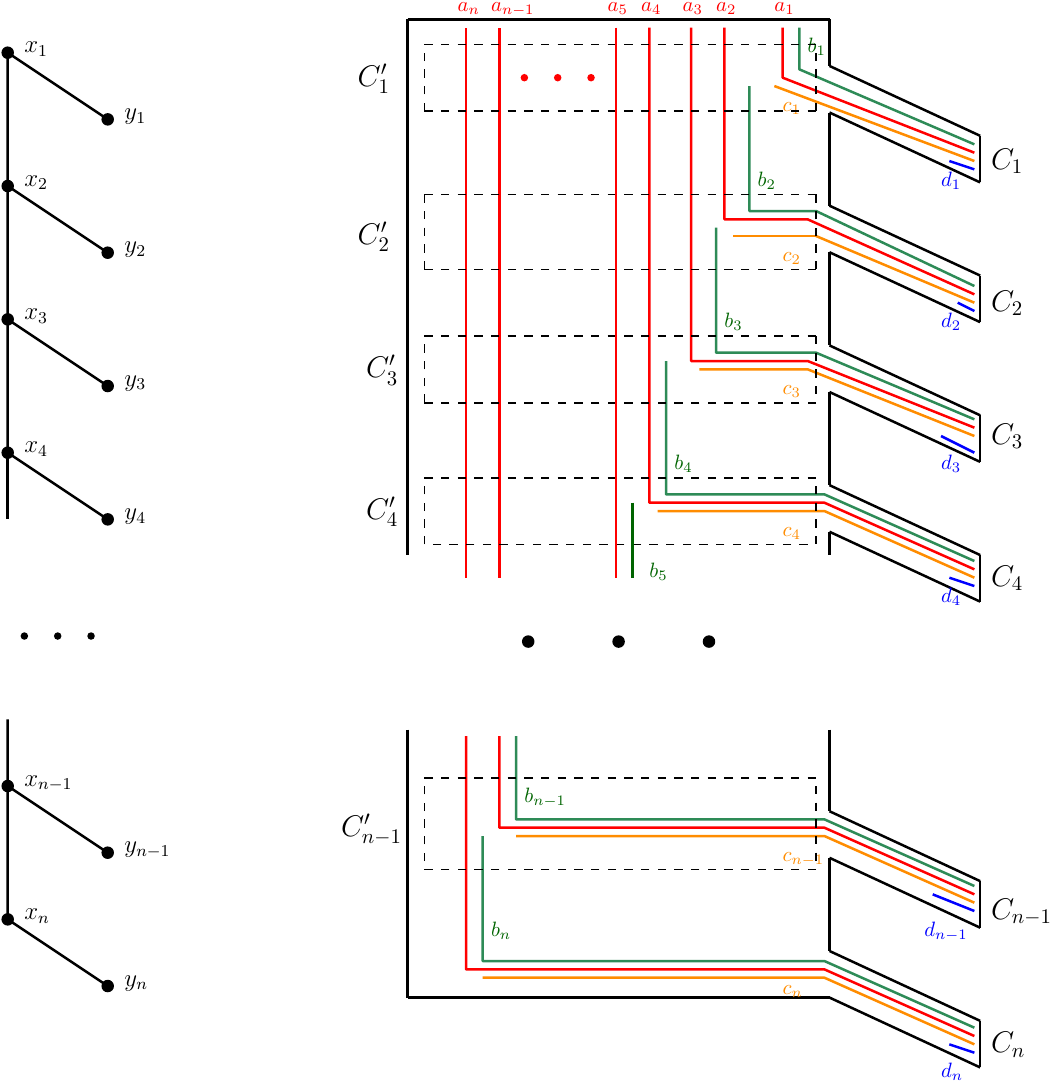}
    \caption{A rooted directed path model of $R_n$. The arborescence $T$ is a caterpillar; on the right, the tree has been fattened into a box diagram so we can see all the paths.}
    \label{fig:rdp}
\end{figure}

The construction of $R_n$ shows the ingredients we need in order to prove that some graphs have exponential leaf rank. The $b_i$'s form an induced path in $R_n$, forcing a linear topology on any subtree model of $R_n$. In other words, any subtree model, and specifically any RS model of $R_n$ must have the overall shape of a caterpillar, where the spine contains $C'_1,\ldots,C'_{n-1}$ and the hairs contain $C_1,\ldots,C_n$. The $a_i$'s have a large neighborhood, which in turn give their subtrees in any RS model a large diameter. The $d_i$'s give each $a_i$ a private neighbor, while the $c_i$'s and $b_i$'s together force every hair to branch off the spine at different points.

\section{The Graph Class $\{R_n \mid n\geq 3\}$ has Exponential Leaf Rank}

In order to prove that the aforementioned construction leads to high leaf rank, we must formalize the intuitions given earlier, and explicitly show that every RS model of $R_n$ must have a subtree with big radius. To be able to do so, we need quite a bit of new infrastructure regarding subtree models.

The first piece is a simple, but very useful lemma:

\begin{lemma}\label{lem:intersectpath}
Let $G$ be a chordal graph and $(T,\mathcal{S})$ a subtree model of $G$. Let $P$ be any path in $G$ with endpoints $u,v$, and let $x_u$ and $x_v$ be two arbitrary nodes in $S_u$ and $S_v$, respectively. For any node $x\in p(x_u,x_v)$, $V(x)\cap P \neq \emptyset$.
\end{lemma}
\begin{proof}
Assume towards a contradiction that there is a node $x_0\in p(x_u,x_v)$ whose cover does not intersect with $P$. Clearly, $x_u,x_v \neq x_0$, and therefore $x_0$ separates $x_u$ and $x_v$. We enumerate the vertices in $P$ $(p_1,p_2,\ldots,p_k)$ where $p_1=u$ and $p_k=v$. Since every subtree in a subtree model must be connected, every $S_{p_i}$ must be contained in one component of $T\setminus x_0$. Also, since $p_i$ and $p_{i+1}$ are neighbors, their respective subtrees intersect and must therefore be contained in the same component. But this leads to a contradiction, since $S_u$ and $S_v$ are located in different components of $T\setminus x_0$ (namely, the ones containing $x_u$ and $x_v$ respectively).
\qed\end{proof}

\begin{definition}[Connecting Path]
Given two disjoint subtrees of a tree $S,S'\subseteq T$, we define a \emph{connecting path} from $S$ to $S'$, denoted $p(S,S')$, as the minimal subgraph $P$ of $T$ (i.e. a path) such that $S\cup S'\cup P$ is connected. Note that $p(S,S')$ contains one node from each of $S$ and $S'$.
\end{definition}

\begin{lemma}\label{lem:cliqueatleaf}
Let $G$, $(T,\mathcal{S})$ and recall the definition of clique subtrees. Given three cliques $C,C',C''\in \mathcal{C}(G)$, if $S_T(C')$ intersects $p(S_T(C),S_T(C''))$, then $C'$ is a separator in $G$. 
\end{lemma}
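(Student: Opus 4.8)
The plan is to exploit the node where $S_T(C')$ meets the connecting path as a single separating node of $T$, and then transfer that separation to $G$ via the standard correspondence between deleting a node in a subtree model and separating vertices in the graph. Throughout I assume $C,C',C''$ are pairwise distinct (otherwise the connecting path degenerates), and I write $P := p(S_T(C),S_T(C''))$, which is well defined since distinct maximal cliques have disjoint clique subtrees.

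First I would fix a node $x' \in S_T(C')\cap P$, which exists by hypothesis, and pin down its cover. Since $x'\in S_T(C') = \bigcap_{v\in C'}S_v$, every vertex of $C'$ has its subtree passing through $x'$, so $C'\subseteq V(x')$. Because $V(x')$ is a clique and $C'$ is a \emph{maximal} clique, this forces $V(x') = C'$. Thus the vertices whose subtrees contain $x'$ are \emph{exactly} those of $C'$, which is what lets deletion of the single node $x'$ stand in for removal of the vertex set $C'$.

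Next I would verify that $x'$ is an interior node of $P$. Let $x_C\in S_T(C)$ and $x_{C''}\in S_T(C'')$ be the two endpoints of the connecting path. Since clique subtrees of distinct maximal cliques are disjoint, $x'\neq x_C$ and $x'\neq x_{C''}$, so $x'$ is interior to $P$. As $P$ is the unique path between $x_C$ and $x_{C''}$ in the tree $T$, removing $x'$ places $x_C$ and $x_{C''}$ in different components of $T\setminus x'$. The core step is then the transfer to $G$: I claim $C'$ separates every $u\in C\setminus C'$ from every $w\in C''\setminus C'$, where both differences are nonempty because distinct maximal cliques cannot contain one another. Every vertex $z\notin C' = V(x')$ satisfies $x'\notin S_z$, so the connected subtree $S_z$ lies wholly in one component of $T\setminus x'$; consequently the endpoints of any edge of $G\setminus C'$ have intersecting subtrees and hence lie in the same component, so no path of $G\setminus C'$ can cross between components. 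Since $S_u\supseteq S_T(C)\ni x_C$ and $S_w\supseteq S_T(C'')\ni x_{C''}$, the vertices $u$ and $w$ fall into different components and are therefore separated by $C'$ in $G$.

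The only delicate point — and the crux of the argument — is the identification $V(x') = C'$ through maximality, which is precisely what makes deletion of the vertices of $C'$ correspond to deletion of the node $x'$; securing the representatives $u\in C\setminus C'$ and $w\in C''\setminus C'$ by the same maximality reasoning is the other small ingredient. Everything after that is the routine observation, already embodied in Lemma~\ref{lem:intersectpath}, that a node of $T$ whose cover is removed induces a separator of $G$.
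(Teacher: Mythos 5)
Your proof is correct and follows essentially the same route as the paper's: both hinge on maximality of the cliques (to get $V(x')=C'$ and nonempty sets $C\setminus C'$, $C''\setminus C'$) and on transferring the separation at the node $x'$ from $T$ to $G$. The only cosmetic difference is that you re-derive the component argument directly instead of invoking Lemma~\ref{lem:intersectpath} on a hypothetical $u$--$w$ path, and you make explicit the identification $V(x')=C'$ that the paper leaves implicit.
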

\begin{proof}
Since $C,C',C''$ are maximal cliques, there are vertices $v\in C\setminus (C\cup C'')',v''\in C''\setminus (C\cup C')$. By Lemma \ref{lem:intersectpath}, every path from $c$ to $c''$ must contain a vertex in $C'$. Since, by definition, $c$ and $c''$ are not neighbors, $C'$ is a separator.
\qed\end{proof}

The above lemma is useful for us because of its contrapositive. Specifically, one can easily verify that in $R_n$, none of the cliques $C_1,\ldots,C_n$ are separators. This means that in any subtree model $(T,\mathcal{S})$ of $R_n$, the subtree $S_T(C_i)$ does not intersect the connecting path between any two other cliques. In other words, for each $1\leq i\leq n$, the clique subtree $S_T(C_i)$ is situated at a leaf of $T$.

\begin{definition}[Median]
Given a tree $T$ and three nodes $u,v,w\in V(T)$, the \emph{median} of the nodes $\fmed(u,v,w)$ is the unique node $m$ that lies on all three paths $p(u,v)$, $p(u,w)$ and $p(v,w)$. It is easy to see that $m$ is equal to one of the nodes (say, $v$) iff $v$ is on $p(u,w)$; otherwise, it separates $u,v,w$ in $T$ (and consequently, has degree at least 3).
\end{definition}

\begin{figure}[ht!]
    \centering
    \includegraphics[width=0.35\linewidth]{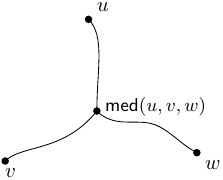}
    \caption{The median of three nodes in a tree.}
    \label{fig:median}
\end{figure}

Now we get to the meat of the proof. We will need the following definitions: Given $R_n$ and any RS model $(T,\mathcal{S})$, we note the following \emph{branch points} in $T$: Let $m_1$ and $m_n$ be the endpoints of $p(S(C_1),S(C_n))$, and for every $1<i<n$, let $m_i$ be the common node between $p(S(C_1),S(C_i))$, $p(S(C_i),S(C_n))$ and $p(S(C_1),S(C_n))$. Also, $m_i$ is the median $\fmed(m_1,m_n,s_i)$ where $s_i$ is the endpoint of $p(S(C_1),S(C_i))$ (or $p(S(C_m),S(C_i))$) in $S(C_i)$. From Lemma \ref{lem:cliqueatleaf} we know that none of $m_1$, $m_n$ and $s_i$ lie on the path between the two others; therefore $m_i$ separates $S(C_1)$, $S(C_i)$ and $S(C_n)$. By definition, every $m_i$ is on $p(m_1,m_n)$.

Take note of the nodes $m_1,\ldots,m_n$ and $s_2,\ldots,s_{n-1}$; these will all be used later on. It is worth to note that since $s_i\in S(C_i)$, the cover of $s_i$ is obviously equal to $C_i$.

We now prove a series of lemmas, concluding with Theorem \ref{thm:highleafrank}, showing that the leaf rank of $R_n$ is exponential in $n$. We will assume that $(T,\mathcal{S})$ is an RS model of $R_n$ containing the branching points mentioned above.

\begin{figure}[ht!]
    \centering
    \includegraphics[width=0.7\linewidth]{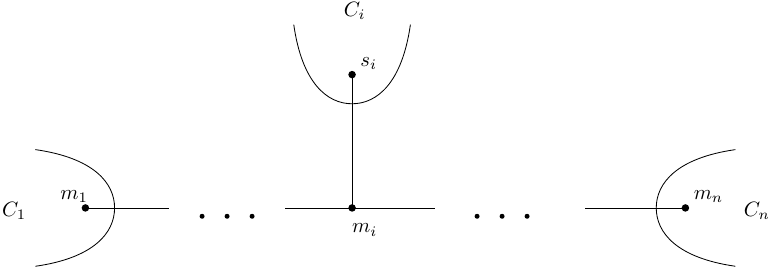}
    \caption{The branching point $m_i$ in a subtree model of $R_n$, and the three cliques it separates.}
    \label{fig:branchpoint}
\end{figure}

\begin{lemma}\label{lem:medianincludes}
For every $1<i<n$, $V(m_i)$ is equal to the union of $\{a_j \mid i\leq j\leq n\}$, $b_i$ and at least one of $c_i$ and $b_{i+1}$.
\end{lemma}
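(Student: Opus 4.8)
The plan is to determine the cover $V(m_i)$ by analyzing which radial subtrees must contain the branch point $m_i$. Recall that $m_i$ separates $S(C_1)$, $S(C_i)$, and $S(C_n)$, and lies on $p(m_1, m_n)$. First I would establish the \emph{inclusions} — the vertices that \emph{must} be in $V(m_i)$. For each $a_j$ with $i \leq j \leq n$: the vertex $a_j$ is adjacent (via the clique $C'_i$, whose members include every $a_\ell$ with $\ell \geq i$, or via $C_j$) to vertices whose clique subtrees lie on both sides of $m_i$. More precisely, $a_j \in C_j \subseteq V(s_j)$, and by the structure of the $a$-vertices, $a_j$ is adjacent to vertices in $C_1$ (specifically $a_1$, or the chain connecting them) and to vertices in $C_n$. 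Using Lemma~\ref{lem:intersectpath} applied to an induced path in $R_n$ running from a vertex of $C_1$ to a vertex of $C_n$ that passes through $a_j$, together with the fact that $S_{a_j}$ is a connected radial subtree whose center sits somewhere relative to $m_i$, I would argue that $S_{a_j}$ must cover $m_i$. The key combinatorial input is that $a_j$ for $j \geq i$ neighbors vertices separated by $m_i$, so its connected subtree is forced across the branch point. The same argument applies to $b_i$, since $b_i \in C'_i$ and $b_i \in C_i$, placing $b_i$ in cliques on both the spine and the hair at position $i$.

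Next I would handle the \emph{``at least one of $c_i$ and $b_{i+1}$''} clause. The point here is that $m_i$ is the median separating the hair $S(C_i)$ from the rest of the spine, and the spine continues toward $C_n$ through the region governed by $C'_i = \{a_j \mid i \leq j \leq n\} \cup \{b_i, b_{i+1}, c_i\}$. Since $c_i, b_{i+1} \in C'_i$ and this clique is not separated away from $m_i$, I would show that the clique subtree $S(C'_i)$ either contains $m_i$ or lies on a definite side; then at least one of the subtrees $S_{c_i}, S_{b_{i+1}}$ must stretch across $m_i$ to realize the required adjacencies connecting clique $C_i$ (the hair) to the continuation toward $C_n$. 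The reason it need not be \emph{both} is that $c_i$ and $b_{i+1}$ play interchangeable roles in bridging the hair at $i$ to the spine: one of them suffices to carry the intersection through $m_i$, and an RS model may route only one across.

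Finally I would verify the \emph{exclusions} — that no \emph{other} vertex lies in $V(m_i)$. Since $V(m_i)$ is a clique (the cover of any node is a clique), it must be contained in some maximal clique of $R_n$; the only maximal cliques containing $\{a_j \mid i \leq j \leq n\} \cup \{b_i\}$ together with $c_i$ or $b_{i+1}$ are $C'_i$ and (partially) neighboring cliques, which rules out stray $c_j, d_j$, or low-index $a_j, b_j$ vertices. I would use the explicit clique structure $C'_i = \{a_j \mid i \leq j \leq n\} \cup \{b_i, b_{i+1}, c_i\}$ to pin $V(m_i) \subseteq C'_i$, then combine with the inclusions above to get the exact equality claimed.

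The main obstacle I expect is the forcing argument in the first step: rigorously proving that $S_{a_j}$ \emph{must} cover $m_i$ rather than merely covering some nearby node. This requires carefully choosing the right induced path in $R_n$ (exploiting that the $a_j$'s form a nested neighborhood structure and that the $b_i$'s form an induced path pinning the linear topology) and applying Lemma~\ref{lem:intersectpath} and Lemma~\ref{lem:cliqueatleaf} so that the connecting paths between $S(C_1)$, $S(C_i)$, and $S(C_n)$ genuinely pass through the center or span of $S_{a_j}$. Handling the boundary interaction between the ``all $a_j$ for $j \geq i$'' inclusion and the exclusion of $a_j$ for $j < i$ — making sure the radial (connected) structure does not accidentally force extra vertices — will demand the most care.
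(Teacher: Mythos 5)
Your overall strategy --- forcing subtrees across the separating node $m_i$ via Lemma~\ref{lem:intersectpath}, and using the fact that covers are cliques to pin $V(m_i)$ down --- is essentially the paper's, and your final step (a clique containing $\{a_i,\ldots,a_n,b_i\}$ and one of $c_i,b_{i+1}$ must sit inside the maximal clique $C'_i$, which kills all stray vertices at once) is a clean, valid packaging of the exclusions, arguably tidier than the paper's case analysis. However, the proposed order of battle --- inclusions first, then exclusions as a ``verification'' --- cannot be executed as written. Lemma~\ref{lem:intersectpath} only ever tells you that \emph{some} vertex of a chosen path lies in $V(m_i)$; to conclude that a \emph{specific} vertex (say $a_j$) does, you must already have excluded every other vertex of that path. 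So the exclusions are prerequisites for the inclusions and the argument must be interleaved: the paper first excludes all $a_j,b_j$ with $j<i$, then deduces $a_i\in V(m_i)$ from the path $(a_1,a_i)$, then uses $a_i\in V(m_i)$ to obtain the ``at least one of'' clause together with the high-index exclusions, and only then gets $a_j\in V(m_i)$ for $j>i$ from the path $(a_1,a_j,d_j)$ --- which needs $a_1,d_j\notin V(m_i)$ (and the fact that $S_{a_1}$ and $S_{d_j}$ lie in \emph{different} components of $T\setminus m_i$) already in hand.

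The second, related gap is that you never identify the one path that does most of this work: $P=(c_i,b_{i+1},b_{i+2},\ldots,b_n)$, running from $C_i$ to $C_n$, two cliques whose subtrees $m_i$ separates. A single application of Lemma~\ref{lem:intersectpath} to $P$ gives $V(m_i)\cap P\neq\emptyset$; since no vertex of $P$ is adjacent to any $a_j$ or $b_j$ with $j<i$, this immediately yields all the low-index exclusions (before any inclusion is known), and once $a_i\in V(m_i)$ is established, the clique property restricts $V(m_i)\cap P$ to $\{c_i,b_{i+1}\}$ --- the only vertices of $P$ adjacent to $a_i$ --- which is exactly the ``at least one of'' clause. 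Your appeal to $c_i$ and $b_{i+1}$ ``playing interchangeable roles in bridging the hair to the spine'' gestures at this but supplies no mechanism. With these two repairs --- reordering, and the explicit path $P$ --- your plan coincides with the paper's proof.
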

\begin{proof}
From the definition, $m_i$ separates the three subtrees $S(C_1)$, $S(C_i)$ and $S(C_n)$, represented by the three nodes $m_1$, $s_i$ and $m_n$ respectively. This means that for each of the three cliques, at least one of their vertices are not in $V(m_i)$.

We start by showing that $a_j,b_j\notin V(m_i)$ for any $j<i$: Consider the path $P=(c_i,b_{i+1},b_{i+2},\ldots,b_n)$ in $R_n$. Since $c_i\in C_i$ and $b_n\in C_n$, and $m_i$ is on the path in $T$ between those two cliques, by Lemma \ref{lem:intersectpath}, $V(m_i)$ contains one of the vertices in $P$. But none of these are adjacent to $a_j$ or $b_j$, therefore none of these can be in $V(m_i)$. Furthermore, as $\bigcup_{j=1}^{i-1}\{a_j,b_j\}$ induce a connected subgraph of $R_n$, all their respective subtrees must lie in the same component of $T\setminus m_i$; namely, the one containing $m_1$.

Next, we show that $a_i\in V(m_i)$. This is easily done by applying Lemma \ref{lem:intersectpath} to the path $(a_1,a_i)$ in $R_n$ and noting that $m_i$ is on the path $p(m_1,s_i)$ in $T$. Since we have established that $a_1\notin V(m_i)$, $a_i$ must be in $V(m_i)$.

Now we show $b_j,c_j,d_j\notin V(m_i)$ for any $i+2\leq j\leq n$, but at least one of $c_i$ and $b_{i+1}$ is. Consider the path $P$ from before. We know at least one vertex in $P$ is in $V(m_i)$, but since $c_i$ and $b_{i+1}$ are the only ones adjacent to $a_i$, they are the only ones that can be in $V(m_i)$. Furthermore, since $\bigcup_{j=i+1}^n \{b_j,c_j,d_j\}$ induce a connected subgraph of $R_n$, all of their respective subtrees must be located in the same component of $T\setminus m_i$; namely, the one containing $m_n$.

Next, we show that $a_j\in V(m_i)$ for every $i<j\leq n$. We have established that the subtrees $S_{a_1}$ and $S_{d_j}$ do not contain $m_i$, and furthermore, they are located in different components of $T\setminus m_i$. Taking the node $s_j\in S(C_j)$, we see that $m_i\in p(m_1,s_j)$. Therefore, we can apply Lemma \ref{lem:intersectpath} to the path $(a_1,a_j,d_j)$ and conclude that $a_j\in V(m_i)$.

Finally, we show $d_i\notin V(m_i)$. This is easily deduced by noting that $d_i$ and $a_n$ are not adjacent, and $a_n\in V(m_i)$.
\qed\end{proof}

\begin{lemma}\label{lem:allonpath}
None of the nodes $m_1,m_2,\ldots,m_n$ are equal. Furthermore, the path $p(m_1,m_n)$ visits all of these nodes in that order.
\end{lemma}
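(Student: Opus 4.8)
The plan is to read the order off the \emph{covers} of the branch points, using Lemma \ref{lem:medianincludes} as the only real input. First I would pin down the covers of the two endpoints separately, since they fall outside the range $1<i<n$ of that lemma. Because $m_1$ is the endpoint of $p(S(C_1),S(C_n))$ lying in $S(C_1)$, it belongs to the clique subtree $S(C_1)=\bigcap_{v\in C_1}S_v$, so its cover contains $C_1$; as $C_1$ is a maximal clique and covers are cliques, this forces $V(m_1)=C_1$, and symmetrically $V(m_n)=C_n$. Combined with Lemma \ref{lem:medianincludes}, which yields $V(m_i)\cap\{a_1,\dots,a_n\}=\{a_i,a_{i+1},\dots,a_n\}$ for every interior $i$, I get that the restriction of the cover to the $a$-vertices is a distinct set for each index: $\{a_1\}$ at $m_1$, the nested tails $\{a_i,\dots,a_n\}$ for $1<i<n$, and $\{a_n\}$ at $m_n$. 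Since equal nodes have equal covers, the nodes $m_1,\dots,m_n$ are pairwise distinct, which settles the first assertion.

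For the ordering I would show that for each $1\le j\le n-2$ the branch point $m_{j+1}$ separates $m_j$ from $m_n$ in $T$. Fixing such a $j$ and applying Lemma \ref{lem:medianincludes} to the interior index $j+1$, its proof places the subtree $S_{a_j}$ (indeed every $S_{a_l}$ and $S_{b_l}$ with $l\le j$) entirely inside the component of $T\setminus m_{j+1}$ containing $m_1$: this is because $a_j\notin V(m_{j+1})$, so $m_{j+1}\notin S_{a_j}$, while the vertices $\{a_l,b_l : l\le j\}$ induce a connected subgraph of $R_n$ whose subtrees all lie on the $m_1$-side. Since $a_j\in V(m_j)$ we have $m_j\in S_{a_j}$, so $m_j$ sits in the $m_1$-component of $T\setminus m_{j+1}$; in particular $m_j\ne m_{j+1}$, and as $m_n$ lies in a different (the $S(C_n)$-) component, the node $m_{j+1}$ lies on $p(m_j,m_n)$. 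Because $m_j,m_{j+1},m_n$ all lie on $P:=p(m_1,m_n)$, this means $m_{j+1}$ comes strictly after $m_j$ when $P$ is traversed from $m_1$. The final comparison, $m_{n-1}$ before $m_n$, is immediate since $m_n$ is an endpoint of $P$ and $m_{n-1}$ is an interior node distinct from it. Chaining these inequalities gives $m_1,m_2,\dots,m_n$ in exactly this order along $P$.

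The step I expect to be most delicate is the handling of the two endpoints $m_1$ and $m_n$: they are not covered by the hypothesis $1<i<n$ of Lemma \ref{lem:medianincludes}, so their covers must be computed by hand (as the full cliques $C_1$ and $C_n$), and the separation argument, which requires $m_{j+1}$ to be an \emph{interior} branch point, must be complemented by the trivial endpoint observation to handle the last comparison $j=n-1$. Everything else is routine bookkeeping with covers and with the component structure of $T\setminus m_{j+1}$; note that strictness of the order comes for free, since the separation argument already certifies $m_j\ne m_{j+1}$ for each consecutive pair.
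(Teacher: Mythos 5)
Your proof is correct, but it takes a genuinely different route from the paper's for the ordering claim. The paper argues about separators in the \emph{graph}: for any $r<s<t$ it notes that all three branch points lie on $p(m_1,m_n)$, so one is in the middle, and then rules out $m_r$ and $m_t$ as the middle node by observing that their covers are not separators of the appropriate vertex sets in $R_n$ and invoking (a version of) Lemma~\ref{lem:cliqueatleaf}. You instead argue about separation in the \emph{tree}: using the component structure of $T\setminus m_{j+1}$ established in the proof of Lemma~\ref{lem:medianincludes}, you place $m_j\in S_{a_j}$ in the $m_1$-side component and $m_n\in S(C_n)$ on the other side, so $m_{j+1}$ lies strictly between consecutive branch points and $m_n$, and the order follows by chaining. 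Your version has two concrete advantages: it avoids applying Lemma~\ref{lem:cliqueatleaf} to the covers $V(m_i)$, which are cliques but not maximal cliques (so the paper's invocation is, strictly read, outside that lemma's hypotheses and needs the more general Lemma~\ref{lem:intersectpath} to be made airtight); and it treats the endpoints $m_1,m_n$ explicitly by computing $V(m_1)=C_1$ and $V(m_n)=C_n$, which the paper's one-line distinctness argument ("follows straight from Lemma~\ref{lem:medianincludes}") glosses over, since that lemma only covers $1<i<n$. The mild cost is that you lean on facts recorded inside the \emph{proof} of Lemma~\ref{lem:medianincludes} (which component each $S_{a_l}$, $l\le j$, falls into) rather than only its statement, though these facts are also easily rederived from the statement plus the connectedness of the subtrees.
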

\begin{proof}
The first claim follows straight from Lemma \ref{lem:medianincludes} by noting that the cover of each branching point is unique. Also, it follows straight from the definition that every $m_i$ is on $p(m_1,m_n)$. For the last claim, we prove the following, equivalent formulation: For any $1\leq r<s<t\leq n$, $m_s$ lies on $p(m_r,m_t)$.

From the previous statement, it is clear that these three nodes lie on a single path, and therefore one of them lies in the middle. However, we see that $V(m_r)$ is not a separator (in $G$) of $V(m_s)\setminus V(m_r)$ and $V(m_t)\setminus V(m_r)$. By Lemma \ref{lem:cliqueatleaf}, $m_r$ cannot lie on $p(m_s,m_t)$. The same argument applies to $V(m_t)$; thus the only remaining choice is that $V(m_s)$ lies on $p(m_r,m_t)$.
\qed\end{proof}

\begin{lemma}\label{lem:increasingdistance}
For any $2<i<n$, $dist(m_i,m_{i+1}) > dist(m_2,m_i)$.
\end{lemma}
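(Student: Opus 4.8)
The plan is to exploit the two radial subtrees $S_{a_i}$ and $S_{a_{i+1}}$, whose inclusion/exclusion behaviour at the branch points is pinned down by Lemma~\ref{lem:medianincludes} together with the adjacencies of $R_n$. By Lemma~\ref{lem:allonpath} the nodes $m_1,\dots,m_n$ occur in order along $p(m_1,m_n)$, so I fix coordinates on this spine and write $p_j := dist(m_1,m_j)$, giving $dist(m_2,m_i)=p_i-p_2$ and $dist(m_i,m_{i+1})=p_{i+1}-p_i$. For any center $c_v$ I use the standard \emph{foot} decomposition: letting $f$ be the node of $p(m_1,m_n)$ closest to $c_v$, at coordinate $q$, and $\delta := dist(c_v,f)$, every spine node $m_j$ satisfies $dist(c_v,m_j)=\delta+|q-p_j|$. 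The facts I will read off are: $a_i,a_{i+1}\in V(m_2)\cap V(m_i)$ and $a_{i+1}\in V(m_{i+1})$, while $a_i\notin V(m_{i+1})$ (all from Lemma~\ref{lem:medianincludes}); and $a_i$ is adjacent to its private neighbour $d_i$ whereas $a_{i+1}$ is not, since $d_i$ lies only in $C_i$. Writing $h_i := dist(m_i,d_i)$, this says $S_{a_i}$ contains $m_2,m_i,d_i$ but not $m_{i+1}$, while $S_{a_{i+1}}$ contains $m_2,m_{i+1}$ but not $d_i$.

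I will argue by contradiction, assuming $dist(m_i,m_{i+1})\le dist(m_2,m_i)$, i.e. $p_i\ge (p_2+p_{i+1})/2$. The heart of the argument is to show that $h_i$ is squeezed strictly between $dist(m_2,m_i)$ and $dist(m_i,m_{i+1})$, which is impossible under the assumption. First, from $S_{a_i}$: since this ball contains $m_2$ but excludes $m_{i+1}$, comparing the two radius inequalities gives $q<(p_2+p_{i+1})/2\le p_i$ for the foot of $c_{a_i}$; in particular the foot lies strictly left of $m_i$, so $c_{a_i}$ is not inside the hair carrying $d_i$ and the foot decomposition applies to $d_i$ as well. Subtracting the inclusion of $d_i$ (distance $(p_i-q)+\delta+h_i\le r_{a_i}$) from the exclusion of $m_{i+1}$ (distance $(p_{i+1}-q)+\delta>r_{a_i}$) then yields $p_{i+1}-p_i>h_i$, that is $dist(m_i,m_{i+1})>dist(m_i,d_i)$.

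Second, from $S_{a_{i+1}}$: this ball contains $m_2$ and $m_{i+1}$ but excludes $d_i$; let $q'$ be the foot of $c_{a_{i+1}}$. If $q'\ge p_i$, then subtracting the inclusion of $m_2$ from the exclusion of $d_i$ gives $h_i>p_i-p_2=dist(m_2,m_i)$; combined with the first part and the assumption this forces $dist(m_2,m_i)\ge dist(m_i,m_{i+1})>h_i>dist(m_2,m_i)$, a contradiction. If instead $q'<p_i$, the same subtraction against the inclusion of $m_{i+1}$ gives $h_i>p_{i+1}-p_i=dist(m_i,m_{i+1})$, directly contradicting $dist(m_i,m_{i+1})>h_i$ from the first part. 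Either way we reach a contradiction, so $dist(m_i,m_{i+1})>dist(m_2,m_i)$.

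The main obstacle is that the hair length $h_i=dist(m_i,d_i)$ is not controlled a priori by the model, so it cannot simply be ignored; the whole proof turns on trapping it between the two distances of interest, and this in turn forces a short case analysis on where the feet $q,q'$ of the two centers fall relative to $m_i$. The one routine point to verify carefully is that the foot decomposition is valid for the off-spine node $d_i$ -- i.e. that the centers $c_{a_i},c_{a_{i+1}}$ do not sit inside the $d_i$-hair -- which follows in each case from the foot already being located strictly on the appropriate side of $m_i$, and in the boundary case $q'=p_i$ from a direct computation that still yields $h_i>dist(m_2,m_i)$.
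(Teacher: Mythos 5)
Your argument is correct and is essentially the paper's own proof rewritten in spine coordinates and as a contradiction: both proofs trap an intermediate quantity (your $h_i$, the distance from $m_i$ to a node of $S_{a_i}\cap S_{d_i}$; the paper's $dist(m_i,s_i)$ with $s_i\in S(C_i)$) strictly between $dist(m_i,m_{i+1})$ and $dist(m_2,m_i)$, using the same two facts --- $S_{a_i}$ reaches $m_2$ and that node but not $m_{i+1}$, while $S_{a_{i+1}}$ reaches $m_2$ and $m_{i+1}$ but not that node --- and the same case split on which side of $m_i$ the centers $c_{a_i},c_{a_{i+1}}$ fall. The one step you should make explicit is that your chosen node of $S_{d_i}$ really lies in the component of $T\setminus m_i$ separated from both $m_2$ and $m_{i+1}$, so that the foot decomposition $dist(c,\cdot)=dist(c,m_i)+h_i$ is legitimate; this holds because $S_{d_i}\supseteq S(C_i)$, is connected, and cannot contain $m_i$ (as $a_n\in V(m_i)$ is not adjacent to $d_i$), whereas the paper avoids the issue entirely by working with $s_i\in S(C_i)$, for which the separation is immediate from the definition of the median $m_i$.
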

\begin{proof}
Recall that $(T,\mathcal{S})$ is an RS model; therefore, for any $v\in V(R_n)$, $S_v$ is characterized by a center $c_v$ and radius $r_v$.

Look then at the vertex $a_i$ for some $2<i<n$. From Lemma \ref{lem:medianincludes}, we know that $S_{a_i}$ contains $m_2$ and (by definition) $s_i$, but not $m_{i+1}$. Also, by Lemma \ref{lem:allonpath}, $m_i$ separates those three nodes. Given the node $c_{a_i}$, we therefore know that $dist(c_{a_i},m_{i+1}) > \max(dist(c_{a_i},m_2),dist(c_{a_i},s_i))$.

Now, since $(T,\mathcal{S})$ is an arbitrary RS model, we do not know where in $T$ the node $c_{a_i}$ is situated, but we will employ two cases, based on which component of $T\setminus m_i$ we find $c_{a_i}$ in.\\

\begin{figure}[ht!]
    \centering
    \includegraphics[width=0.9\linewidth]{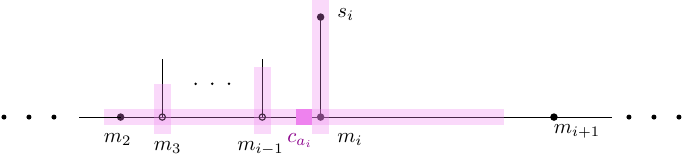}
    \caption{The purple shadow represents the radial subtree $S_{a_i}$ inside $T$, with center $c_{a_i}$ (the solid purple block). It reaches $m_2$ and $s_i$, but not $m_{i+1}$.}
    \label{fig:Nes}
\end{figure}

\emph{Case 1:} $c_{a_i}$ is not in the component of $T\setminus m_i$ containing $m_2$.

This includes the case $c_{a_i} = m_i$. In this case, we see that $$dist(m_i,m_{i+1}) \geq (dist(c_{a_i},m_{i+1})-dist(c_{a_i},m_i)) >$$ $$(dist(c_{a_i},m_2)-dist(c_{a_i},m_i)) = dist(m_i,m_2)$$
The first inequality is a strict inequality iff $c_{a_i}$ is in the component of $T\setminus m_i$ containing $m_{i+1}$; otherwise it is an equality.\\

\emph{Case 2:} $c_{a_i}$ is in the component of $T\setminus m_i$ containing $m_2$.

Now we see that $$dist(m_i,m_{i+1}) = (dist(c_{a_i},m_{i+1})-dist(c_{a_i},m_i)) >$$ $$(dist(c_{a_i},s_i)-dist(c_{a_i},m_i)) = dist(m_i,s_i)$$
(This corresponds to the scenario in Figure \ref{fig:Nes}.)

To complete the proof, we look at the center of another vertex, namely $c_{a_{i+1}}$. From Lemma \ref{lem:medianincludes}, we know that $S_{a_{i+1}}$ contains $m_2$ and $m_{i+1}$, but not $s_i$. Since $dist(m_i,m_{i+1}) > dist(m_i,s_i)$, $c_{a_{i+1}}$ must be placed in the component of $T\setminus m_i$ containing $m_{i+1}$. But now $$dist(m_i,s_i) = (dist(c_{a_{i+1}},s_i)-dist(c_{a_{i+1}},m_i)) >$$ $$(dist(c_{a_{i+1}},m_2)-dist(c_{a_{i+1}},m_i)) = dist(m_i,m_2)$$
Now we have $dist(m_i,m_{i+1}) > dist(m_i,s_i) > dist(m_i,m_2)$ and the proof is complete.
\qed\end{proof}

\begin{theorem}\label{thm:highleafrank}
The leaf rank of $R_n$ is at least $2^{n-2}$.
\end{theorem}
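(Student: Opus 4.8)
The plan is to chain together the distance inequality from Lemma~\ref{lem:increasingdistance} to get exponential growth in the distances between consecutive branch points, and then observe that some radial subtree must span this entire exponentially long stretch of the spine, forcing its radius to be exponential. By Corollary~\ref{cor:neslowerbound}, this lower-bounds the leaf rank.

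First I would extract a clean recurrence. Define $\ell_i := dist(m_i, m_{i+1})$ for the relevant range of $i$. Lemma~\ref{lem:increasingdistance} states that $dist(m_i, m_{i+1}) > dist(m_2, m_i)$ for $2 < i < n$. Since the path $p(m_1,m_n)$ visits $m_2, m_3, \ldots, m_n$ in order (Lemma~\ref{lem:allonpath}), the distance $dist(m_2, m_i)$ telescopes as $\sum_{j=2}^{i-1} \ell_j$. Thus the lemma reads
$$\ell_i > \sum_{j=2}^{i-1} \ell_j.$$
A standard induction then shows $\sum_{j=2}^{i} \ell_j > 2^{\,i-2}$ for each $i$ in range, and hence that $dist(m_2, m_n) = \sum_{j=2}^{n-1} \ell_j$ is at least $2^{n-2}$ (using that all distances are positive integers and the base case $\ell_2 \geq 1$, so the partial sums at least double at each step).

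Next I would identify a single vertex whose radial subtree must reach from $m_2$ all the way out to $m_n$. The natural candidate is $a_2$ (or whichever $a_j$ has the largest neighborhood stretching across the spine): by Lemma~\ref{lem:medianincludes}, $a_j \in V(m_i)$ for all $i \le j$, so $S_{a_2}$ contains every branch point $m_2, \ldots, m_n$. Since these all lie on the path $p(m_2, m_n)$ and $S_{a_2}$ is a connected subtree (indeed a radial subtree with center $c_{a_2}$ and radius $r_{a_2}$), the subtree must contain the entire path between $m_2$ and $m_n$, whose length we have just shown is at least $2^{n-2}$. A radial subtree containing two nodes at distance $D$ must have radius at least $D/2$; but a sharper bound follows since $S_{a_2}$ contains all of $m_2,\ldots,m_n$ in a line, so $r_{a_2} \ge \max(dist(c_{a_2},m_2), dist(c_{a_2},m_n)) \ge dist(m_2,m_n)/2 \ge 2^{n-3}$, and one can tighten the constant by a more careful choice of endpoints to reach $2^{n-2}$ as stated.

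The main obstacle will be pinning down the exact constant in the exponent so that the radius bound is genuinely $2^{n-2}$ rather than, say, $2^{n-3}$; this requires being careful about (i) whether the recurrence gives $dist(m_2,m_n)$ or twice the radius, and (ii) choosing the right pair of branch points (or the right $a_j$) so that the center $c_{a_j}$ cannot sit in the ``middle'' and halve the required radius. I expect the resolution to hinge on noting that $S_{a_2}$ must contain both $m_2$ and $m_n$ together with all intermediate $m_i$, so its radius is bounded below by the full spine distance rather than half of it once one accounts for the constraint that $c_{a_2}$ itself lies on or near this path; combined with the doubling recurrence this yields the claimed $2^{n-2}$. Finally, invoking Corollary~\ref{cor:neslowerbound} with $r = 2^{n-2}$ completes the argument, since every RS model of $R_n$ must contain this large-radius subtree.
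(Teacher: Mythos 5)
Your overall strategy is exactly the paper's: telescope $dist(m_2,m_n)$ along the branch points using Lemma~\ref{lem:allonpath}, turn Lemma~\ref{lem:increasingdistance} into a more-than-doubling recurrence on the partial sums, exhibit one radial subtree that must contain both $m_2$ and $m_n$, bound its radius by half the diameter, and invoke Corollary~\ref{cor:neslowerbound}. There is, however, one concrete slip: the witness vertex must be $a_n$, not $a_2$. From the fact you yourself quote ($a_j\in V(m_i)$ precisely when $i\leq j$, by Lemma~\ref{lem:medianincludes}), the vertex $a_2$ lies only in the covers of $m_1$ and $m_2$; indeed $a_2\notin C'_i$ for $i>2$, so $S_{a_2}$ does not reach $m_n$ and the argument collapses for that choice. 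The subtree that contains \emph{every} branch point $m_2,\ldots,m_n$ is $S_{a_n}$, which is what the paper uses; your parenthetical ``whichever $a_j$ has the largest neighborhood'' points to the right fix, but as written the step fails. Your worry about losing a factor of $2$ between diameter and radius is not resolved by any clever repositioning of $c_{a_n}$ (the center genuinely can sit in the middle of the spine); the paper simply accepts $r_{a_n}\geq dist(m_2,m_n)/2$ and absorbs the factor of $2$ into the exponent, so you should do the same rather than promise a tightening that is not available.
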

\begin{proof}
By Lemma \ref{lem:medianincludes}, the subtree $S_{a_n}$ contains both $m_2$ and $m_n$, and therefore has diameter at least $dist(m_2,m_n)$. From Lemma \ref{lem:allonpath} we see that $dist(m_2,m_n) = dist(m_2,m_3)+dist(m_3,m_4)+\ldots+dist(m_{n-1},m_n)$. From Lemma \ref{lem:increasingdistance} and the fact that $dist(m_2,m_3) \geq 1$, we see that $dist(m_2,m_n) \geq 2^{n-1}-1$. This implies that $r_{a_n} \geq 2^{n-2}$, which by Corollary \ref{cor:neslowerbound} and the fact that $(T,\mathcal{S})$ is an arbitrary RS model, implies that $R_n$ has leaf rank at least $2^{n-2}$.
\qed\end{proof}

\begin{corollary}
Let $\mathcal{R}=\{R_m \mid m\geq 3\}$. Then, $ls_{RDP} \geq ls_\mathcal{R} = \Omega(2^{\frac{n}{4}})$.
\end{corollary}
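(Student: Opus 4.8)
The plan is to obtain the corollary directly from Theorem~\ref{thm:highleafrank} by unwinding the definition of leaf span, so that the only real work is bookkeeping between the number of vertices $n$ and the index $m$. The statement splits into two parts: the inequality $ls_{RDP}(n) \geq ls_\mathcal{R}(n)$, which comes from class inclusion, and the bound $ls_\mathcal{R}(n) = \Omega(2^{n/4})$, which is a re-expression of the theorem.

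First I would establish $ls_{RDP}(n) \geq ls_\mathcal{R}(n)$. By the Remark in Section~3, every graph $R_m$ is a rooted directed path graph, so $\mathcal{R} \subseteq RDP$. The leaf span $ls_\mathcal{F}(n)$ is the smallest $k$ for which \emph{every} $n$-vertex graph in $\mathcal{F}$ admits a $k$-leaf root; enlarging the class $\mathcal{F}$ can only make this requirement harder to meet, so $ls_{RDP}(n) \geq ls_\mathcal{R}(n)$ holds for every $n$. Next I would compute $ls_\mathcal{R}$ on the relevant vertex counts. Each $R_m$ has exactly $4m$ vertices, so for $n = 4m$ with $m \geq 3$ the unique member of $\mathcal{R}$ on $n$ vertices is $R_m$, whence $ls_\mathcal{R}(4m) = lrank(R_m)$. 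Theorem~\ref{thm:highleafrank} gives $lrank(R_m) \geq 2^{m-2}$, and substituting $m = n/4$ yields
$$ls_\mathcal{R}(n) = lrank(R_{n/4}) \geq 2^{n/4 - 2} = \frac{1}{4} \cdot 2^{n/4},$$
which is $\Omega(2^{n/4})$ along the subsequence $n \in \{12, 16, 20, \ldots\}$ of multiples of $4$.

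I do not expect a genuine obstacle here, since the exponential lower bound is already supplied by Theorem~\ref{thm:highleafrank}; the corollary merely restates it in terms of $n = 4m$ rather than $m$ and lifts it from $\mathcal{R}$ to its superclass $RDP$. The one point deserving a word of care is that $\mathcal{R}$ contains a graph only when $n$ is a multiple of $4$, so the asymptotic claim $\Omega(2^{n/4})$ must be read as witnessed along that arithmetic subsequence, the intermediate vertex counts carrying no information about $\mathcal{R}$. If a bound over all $n$ were desired one could pad each $R_m$ with a bounded number of simplicial true twins to fill the gaps, which should preserve the order of growth; but this refinement is unnecessary for the $\Omega$-bound as stated.
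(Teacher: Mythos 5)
Your proof is correct and follows exactly the reasoning the paper intends (the paper states this corollary without proof, treating it as an immediate consequence of Theorem~\ref{thm:highleafrank} and the remark that each $R_m$ is an RDP graph on $4m$ vertices). Your extra care about the bound only being witnessed along vertex counts $n \in \{12,16,20,\ldots\}$ is a valid and slightly more scrupulous reading of the leaf-span definition than the paper itself offers.
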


\section{Conclusion}

We have shown that the leaf rank of leaf powers is not upper bounded by a polynomial function in the number of vertices. While such an upper bound has never been explicitly conjectured in the literature, we nevertheless believe that this result is surprising. The only previously established lower bounds for leaf rank are linear in the number of vertices \cite{BHMW10}, and, as previously noted, most graph classes that have been shown to be leaf powers have linear upper bounds on their leaf rank as well. Though the $k$-leaf roots of RDP graphs found by Brandstädt et al. in \cite{BHMW10} had $k$ exponential in the number of vertices, the authors left it as an open question to ``determine better upper bounds on their leaf rank''.\\

Single exponential upper bounds on leaf rank of leaf powers generally have not been found, and we leave it as an open question whether the leaf span of leaf powers is $2^{\Theta(n)}$. However, we will finish with the following nice observation noted by B. Bergougnoux \cite{Ber23}, that shows that recognizing leaf powers is in \textsf{NP}. This implies a not much worse upper bound on the leaf span of leaf powers:

Given a graph $G$, a positive certificate for $G$ being a leaf power consists of a candidate leaf root $(T,\tau)$, where every internal node of $T$ has degree at least 3; and a linear program that (say) maximizes the sum of weights on each edge in $T$, while fulfilling constraints that every pair of adjacent vertices in $G$ has distance at most 1 in $T$, and every pair of non-adjacent vertices in $G$ has distance higher than 1 in $T$. If the linear program is feasible, then $(T,\tau)$ is a weighted leaf root of $G$.\\

The above linear program can be solved in polynomial time, outputting a feasible solution (if one exists) with rational weights with a polynomial number of bits. Therefore, if $G$ admits a leaf root, it admits a $k$-leaf root where $k \leq 2^{n^c}$ for some (fairly small) constant $c$. This observation also implies that if recognizing $k$-leaf powers is strongly in \textsf{P} for arbitrary $k$, then computing leaf rank is also in \textsf{P}, since given a polynomial-time algorithm for recognizing $k$-leaf powers, one could compute leaf rank by way of binary search on the value of $k$. Recognizing leaf powers would also be in \textsf{P}.

\bibliography{ref}

\end{document}